\documentclass[11pt]{article}

\usepackage[margin=1in]{geometry}

\usepackage{setspace}
\usepackage{amsmath, amssymb, amsthm, mathtools}
\usepackage{stmaryrd}
\usepackage{mathpartir}
\usepackage{algorithm}
\usepackage{algpseudocode}
\usepackage{wrapfig}

\usepackage{microtype}

\usepackage{graphicx}
\usepackage{booktabs}

\usepackage[colorlinks=true,allcolors=blue]{hyperref}

\newtheorem{theorem}{Theorem}
\newtheorem{lemma}[theorem]{Lemma}

\newtheorem{corollary}[theorem]{Corollary}
\theoremstyle{definition}
\newtheorem{definition}[theorem]{Definition}

\title{\Large\bfseries Static Analysis Under Non-Deterministic Program Assumptions}
\author{Abdullah H. Rasheed}
\date{}

\begin{document}

\maketitle

\begin{abstract}
Static analyses overwhelmingly trade precision for soundness and automation. For this reason, their use-cases are restricted to situations where imprecision isn't prohibitive. In this paper, we propose and specify a static analysis that accepts user-supplied program assumptions that are local to program locations. Such assumptions can be used to counteract imprecision in static analyses, enabling their use in a much wider variety of applications. These assumptions are taken by the analyzer non-deterministically, resulting in a function from sets of accepted assumptions to the resulting analysis under those assumptions. We also demonstrate the utility of such a function in two ways, both of which showcase how it can enable optimization over a search space of assumptions that is otherwise infeasible without the specified analysis.
\end{abstract}

\newpage

\newcommand{\A}[0]{\mathcal{A}}
\section{Introduction}

Static analysis is a widely used method for bug detection, compiler-level optimization, and more. This technique determines relevant properties of a given program, and it does so without executing the program. These analyses are typically categorized as ``sound over-approximations'' in that they may produce false positives. For example, a bug-detecting static analysis that is sound will report all instances of bugs, but it may be uncertain about some. These uncertainties are typically referred to as false positives. On the contrary, if the analyzer reports that no bugs are found, then there are truly no bugs.

In general, sound static analyzers must find a balance between automation, precision, and scalability. While ``assume-guarantee'' reasoning can sacrifice soundness to increase precision and scalability by allowing for local refinements, existing frameworks require assumptions to be fixed \emph{a priori}. Furthermore, this approach still treats the environment as a static input, making it brittle in the face of uncertainty. This limits the efficacy of such analyses for tasks that require optimization and synthesis, because the analyzer cannot reason about the guarantees of an assumption without first committing to it, leading to a costly trial-and-error loop. A method for analyzing under a non-deterministic choice of assumptions would bridge this gap by providing the missing ``guarantee'' information and enable optimization over sets of candidate assumptions. More formally, we propose a symbolic parameterization of static analyses that \emph{lifts} the domain $\Sigma$ to a functional domain $2^\A \to \Sigma$ where $\A$ is a set of user-supplied assumptions annotated at program locations. This immediately provides an oracle to obtain, for any desired set of assumptions, the corresponding static analysis under those assumptions. We also show that the parameterization can work on top of abstract interpreters while preserving soundness.

A number of existing works aim to solve similar goals, but our approach remains novel due to its qualification of verification by semantic, run-time assumptions. This is unlike known methods such as trace partitioning, that recover precision by partitioning the state space on conditional paths (or control-flow history in general) \cite{xavier07}. Existing lifted static analyses are highly related to our approach, but partitioning is confined to syntactic, compile-time variability (Boolean features defined at build time) \cite{chen14,parnas76}. Even approaches that support run-time variability are confined to partitioning on environment values rather than run-time program assumptions \cite{dimovski21}. Approaches like disjunctive abstract domains \cite{popeea06, sank06,xu23,giacobazzi98} obtain high precision by representing states as a disjunction of abstract elements (sometimes encoding path conditions), but they suffer from path explosion and their disjunctions are ``blind.'' They may know that the state is $A \lor B$, but they lose the reason \emph{why}, whereas our approach parameterizes the disjunction with assumptions (effectively, $(Assumption_1 \to A) \land (Assumption_2 \to B)$). This allows us to maintain a causal link between the environment, run-time assumptions, and the program state.

Unlike naive approaches that require exponential re-analysis, we perform the analysis in a single-pass while employing two strategies to obtain scalability. The first is to enforce a normal form on the parameterized states. By partitioning the assumption space into equivalence classes, we maintain a unique and compact representation that collapses redundant analysis results, enabling scalable reasoning over large sets of candidate assumptions. The second strategy can force worst-case run time down at the developer's discretion by providing a ``knob'' that balances scalability and precision. This is in the form of \emph{approximate merging}, which merges equivalence classes by joining their analysis results, resulting in over-approximation. This reduction can be applied arbitrarily, enabling \emph{controlled} approximation.

To demonstrate the utility of the parameterized analysis, we show that it simplifies an assumption synthesis problem: given a program with an assumption space (candidate assumptions) and many assert statements throughout, find a minimal set of assumptions that makes the program satisfy all assertions. The simplification of this problem also entails the simplification of many other related problems, as we discuss in the paper.

We also address potential ``inconsistencies'' in assumption sets by first showing that the parameterized analysis result enables fixpoint computation over $2^\A$. This allows us to employ a variety of known principled techniques for obtaining useful information. Particularly, we use fixpoint methods familiar to non-monotone logic to find consistent and inconsistent assumption sets.

In summary, the main contributions of this paper as follows:
\begin{itemize}
    \item \textbf{Parameterized Analyses.} We design and formalize a method for extending existing static analyses to support an ``assume'' statement, and parameterizing the analysis to yield a function from subsets of assumptions to analysis results under those assumptions. This analysis is performed in a single pass, and uses a normal form to represent the parameterized program states in a unique and compact manner.
    \item \textbf{Correctness Proofs and Abstract Interpretations.} We provide a correctness proof of the form ``for each subset $A$ of assume statements in the program, the parameterized analysis provides the same analysis result as the original analyzer under the assumptions $A$.'' We also extend this theorem to show that parameterizing a sound abstract interpreter will yield a sound analysis w.r.t. all sets of assumptions.
    \item \textbf{Sound Parameterized Approximation.} To avoid an exponential (in the number of assumptions) worst-case complexity in the parameterized analysis, we outline a method for speeding up computation by approximation. We also prove soundness of this operation and discuss approaches for losing minimal precision.
    \item \textbf{Assumption Synthesis.} To demonstrate the applicability of this analysis, we show that it can be applied to the problem of synthesizing local conditions within a program to satisfy many assert statements. This application also demonstrates where unsoundness is not exactly an issue.
    \item \textbf{Bounding Consistency in Assumptions.} We also demonstrate that the parameterized analysis can be used to prune inconsistency in assumption sets. That is, it becomes easier to find sets of assumptions that are not ``self-contradictory.'' This is because having an oracle to obtain analysis results under any given subset of assumptions allows for fixpoint computation \emph{using} the oracle. The technique draws from techniques used for non-monotone logics in knowledge representation.
\end{itemize}

\newcommand{\lfp}[1]{\operatorname{lfp}(#1)}
\section{Background}

A \emph{program} is represented as a control-flow graph (CFG) with $n$ nodes as statements and directed edges representing flow of control. An example of a written program and its corresponding CFG can be seen in Figure \ref{fig:cfg_example}.

Nodes transform the program state as defined by the language's semantics. For example, a node corresponding to the statement $x := 5$ may transform a program state $\sigma$ into $\sigma[x \mapsto 5]$ where $\sigma \in Var \hookrightarrow \mathbb{Z}$. 
\begin{wrapfigure}{r}{0.55\linewidth}
\centering
\fbox{
\begin{minipage}{0.48\linewidth}
\centering
\begin{algorithmic}
\State $x := \text{input()}$
\If{$x > 0$}
    \While{$x \leq 10$}
        \State $x := x+2$
    \EndWhile
\Else
    \State $x := 0$
\EndIf
\State \textbf{return } $x$
\end{algorithmic}
\end{minipage}
\hfill
\begin{minipage}{0.48\linewidth}
\centering
\includegraphics[width=\linewidth]{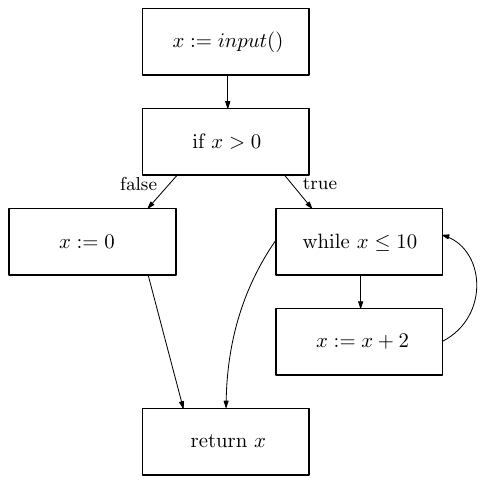}
\end{minipage}
}
\caption{Sample program and its corresponding CFG.}
\label{fig:cfg_example}

\end{wrapfigure}
The \emph{collecting semantics} describes all possible program states that can occur at each program location (CFG node). In essence, if $Concrete$ is the set of concrete program states, the collecting semantics $CS$ is defined over $\mathcal{P}(Concrete)$ such that for any CFG node $v$, $CS(v) \subseteq Concrete$ is the set of possible states that the program may hold at $v$. Statements similarly transform the collecting semantics. For example, the statement $x := 5$ may transform a set $S \subseteq Concrete$ of concrete states into the set $\{\sigma[x\mapsto 5] \mid \sigma \in S\}$ if $Concrete$ consists of maps from variables to integers.

If there is a directed edge from a CFG node $v$ to another node $v'$, we will write $v \to v'$. If the state $\sigma \in \Sigma$ can enable this transition, then we write $v \to_\sigma v'$. The set of predecessors of $v$ from a set of concrete states $S$ is defined by $$pre(v, S) = \{v' \mid \exists \sigma \in S : v' \to_\sigma v\}$$
and we will let $pre(v) = \{v' \mid \exists \sigma \in CS(v') : v'\to_\sigma v\}$. For the purposes of this paper, the definition of $pre(v)$ can be left abstract, or can even be defined by the over-approximation $\{v' \mid v' \to v\}$.

The set of possible ``incoming'' states to a CFG node $v$ is defined as $$CS_{pre}(v) = \{\sigma \in CS(v') \mid v' \in pre(v)\}$$

We will use $\Sigma$ to denote a complete lattice representing (abstract) program states. We will similarly use $\Sigma^\sharp$ to represent an abstraction of $\Sigma$ related by a Galois connection (later defined).

Throughout this paper, we will use the following system of equations to model a static analysis. Let $C_v$ be a constraint variable attached to $v$. This variable represents the (abstract) program state directly \emph{after} executing $v$ in the program. Each of these constraint variables has the following constraint $$C_v = \bigsqcup_{v' \in pre(v)} t_{v}(C_{v'})$$ for some transformer $t_v$ denoting the semantics of $v$. Observe that this is a function on all the constraint variables. That is, we may write 
\begin{align*}
    C_{v_1} &= f_{v_1}(C_{v_1},\ldots,C_{v_n})\\
    &\vdots\\
    C_{v_n} &= f_{v_n}(C_{v_1},\ldots,C_{v_n})
\end{align*}
We can further simplify this into a function $\vec f : \Sigma^n \to \Sigma^n$:
$$\vec f(\sigma_1,\ldots,\sigma_n) = (f_{v_1}(\sigma_1,\ldots,\sigma_n),\ldots,f_{v_n}(\sigma_1,\ldots,\sigma_n))$$
and the analysis result is then $\lfp{\vec f}$. As a short-hand, we will write $\llbracket P \rrbracket$ as the analysis result of a program $P$.

Let $\Sigma$ and $\Sigma^\sharp$ be complete lattices, and let $\alpha : \Sigma \to \Sigma^\sharp$ and $\gamma : \Sigma^\sharp \to \Sigma$. We say that $\alpha$ and $\gamma$ form a Galois connection if for all $\sigma \in \Sigma$ and $\sigma^\sharp \in \Sigma^\sharp$, $\alpha(\sigma) \sqsubseteq^\sharp \sigma^\sharp \Leftrightarrow \sigma \sqsubseteq \gamma(\sigma^\sharp)$. An analysis $\llbracket \cdot \rrbracket^\sharp$ over $\Sigma^\sharp$ is \emph{fixpoint sound} if for all programs $P$, $\alpha(\llbracket P \rrbracket) \sqsubseteq \llbracket P \rrbracket^\sharp$. We will simply say sound to mean fixpoint sound in this paper. We will also use $\sqsubseteq,\sqcup,\sqcap$ to be the partial order relation, join operation, and meet operation respectively of lattices, omitting additional notation indicating the lattice unless it is unclear from context. This is to keep notation less cluttered.

Let $L$ be a complete lattice. A function $f : L \to L$ is monotone if $\forall x,y\in L: x \sqsubseteq y \Rightarrow f(x) \sqsubseteq f(y)$, and it is anti-monotone if $\forall x,y \in L : x \sqsubseteq y \Rightarrow f(y) \sqsubseteq f(x)$. We will say $f^0(x)=x$, and $f^k(x)=f(f^{k-1}(x))$ when $k > 0$. The Knaster-Tarski fixpoint theorem provides that for any monotone $f$, its least fixed point $\operatorname{lfp}(f)$ and greatest fixed point $\operatorname{gfp}(f)$ necessarily exist \cite{tarski1955}. We will let $\operatorname{fix}(f)$ be the set of fixed points of any function $f : L \to L$. For a monotone $f$, we say it is $\omega$-continuous if for every $\omega$-chain $x_0\sqsubseteq x_1 \sqsubseteq \ldots$ where the supremum exists (which is always in a complete lattice), the following equality holds: $$f\left(\bigsqcup_{i \in \omega} x_i\right) = \bigsqcup_{i \in \omega} f(x_i)$$
We make the assumption that all forward transformers $f_{v_i}$ of a static analysis are $\omega$-continuous (thus requiring $\omega$-continuity in $t_{v_i}$). We later discuss how this is not a necessary requirement, but is only to simplify proofs.

Finally, by the Kleene fixed point theorem, if $f$ is $\omega$-continuous, then $\operatorname{lfp}(f) = \bigsqcup_{k\geq 0} f^k(\bot)$ where $\bot = \bigsqcup \emptyset$ (the bottom element of the lattice) \cite{stoltenberg94}.

\newcommand{\assume}[1]{\textbf{assume }#1}
\section{Program Assumptions}

We now consider extending a programming language with the statement ``$\assume{p}$.'' This statement intuitively serves to \emph{restrict} the set of executions to those that satisfy $p$, thereby enforcing $p$ on the program state. Instead of letting $p$ be a predicate over program states, we require it to represent a program state $\pi_p$. A state $\sigma$ is said to \emph{satisfy} $p$ if $\sigma \sqsubseteq \pi_p$. For example, with the collecting semantics, $p$ can be a set of concrete states. The assume statement then restricts the set of concrete states to those contained in $p$. This is, of course, unrealistic to manually write out in a program, however we will see that it is rather natural in many abstract domains.

Let $v$ be a CFG node representing an assume statement $\assume{p}$. A more formal definition of assume statement's semantics is as follows:
$$CS(v) = \{\sigma \in CS_{pre}(v) \mid \sigma \sqsubseteq \pi_p\}$$

To enforce an assumption $p$ on a program state $\sigma$, we may simply perform the meet operation $\sigma \sqcap \pi_p$. This \emph{restricts} $\sigma$ by forcing it to satisfy $p$ (since $\sigma \sqcap \pi_p \sqsubseteq \pi_p$). Observe that expressiveness in assumptions is directly related to expressiveness of the chosen domain. 

\textit{Example.} Suppose program states are in the $Interval$ domain (partial functions from variables to integer intervals). We may write assumptions of the form $\assume{x \geq 3}$ (corresponding to the state $\pi_{x\geq 3} = [x \mapsto [3,\infty]]$) or $\assume{y \leq 7}$ (corresponding to the state $\pi_{y \leq 7} = [y\mapsto [-\infty, 7]]$). We may also have logical connectives like $\assume{x \geq 3 \land x \leq 10}$ and others using any combination of $\land,\lor,\neg$. Let $\sigma = [x \mapsto [-12,8], y \mapsto [-\infty, 13]]$. To enforce $\assume{x \geq 3 \land x \leq 10}$, we may apply the meet as described before:
\begin{align*}
    \sigma \sqcap \pi_{x\geq 3 \land x \leq 10} = \sigma \sqcap [x \mapsto [3,10]] = [x\mapsto [3,8],y\mapsto [-\infty, 13]]
\end{align*}
and this map now satisfies $x \geq 3 \land x \leq 10$. 

Consider the case where $\sigma \sqcap \pi_p$ is ``infeasible.'' For example, if $p$ is $x \geq 3$ and $\sigma = [x \mapsto [-\infty, 2]]$, $\sigma \sqcap \pi_p = [x \mapsto \emptyset]$ which is not a feasible program state. In this case, we have a \emph{proof} that the assumption is false at this location and state. Recognizing infeasible states will be useful later on when searching for consistent sets of assumptions. We will let the predicate $Feasible_p(\sigma)$ denote whether $\sigma$ is feasible for $p$ (whether $\sigma \sqcap \pi_p$ is a feasible program state). We will also assume that $Feasible_p$ is monotone: if $\sigma_1 \sqsubseteq \sigma_2$ and $Feasible_p(\sigma_1)$, then $Feasible_p(\sigma_2)$.

Before defining the new analysis, we will define the transformer for an assume statement in the aforementioned analysis setup. This is for the sake of completion and a proof of correctness later on. Let $v$ be a CFG node for the statement $\assume{p}$. We define $$t_v(\sigma) = \sigma \sqcap \pi_p$$
with the semantics being that it enforces the assumption $p$ onto the input state, as described before.

For the remainder of this paper, we will let $\A$ denote the (finite) set of CFG nodes that represent assume statements in a given program. We will also denote, for all $a \in \A$, a propositional variable $\llparenthesis a \rrparenthesis$, representing whether or not $a$ is accepted as an assumption. This allows us to represent conditions on assumption sets symbolically.

\section{Parameterized Analysis}

Now we wish to perform a static analysis on a program, resulting in a function $F: 2^\A \to \Sigma^n$. That is, the result of the static analysis should be able to answer the question ``for the set $A \subseteq \A$ of assumptions, what is the static analysis result under $A$?'' (namely, $F(A)$). We would also like to avoid running a static analysis for each $A \subseteq \A$ separately. In this section, we will outline a one-pass analysis that answers this question while performing some computational optimizations. Exponential in $|\A|$ worst-case, however, is unavoidable while maintaining full precision but we will later demonstrate that sub-exponential time is achievable by sacrificing some precision.

\subsection{Parameterized Abstract Domain}

The analysis we define works by ``parameterizing'' an existing analysis. For now, we will assume the static analysis works over a single concrete domain $\Sigma$. We will later show that these methods can work on top of an abstract interpreter as well for computational feasibility. Suppose we have a sound forward transformer $\vec f : (\Sigma)^n \to (\Sigma)^n$, so that $\llbracket P \rrbracket = \operatorname{lfp}(\vec f)$.

Instead of keeping track of a single abstract state in $\Sigma$ at each location, we will keep track of a set of \emph{rules} per location. Each rule has
\begin{itemize}
    \item A \emph{condition} $\phi_i$ on assumptions. The rule is applied only for $A \subseteq \A$ that satisfy $\phi_i$.
    \item A \emph{result} $\sigma_i \in \Sigma$ corresponding to the abstract state that holds under condition $\phi_i$.
\end{itemize}
We may consider $\phi_i$ to be boolean expressions with atoms from $\A$, or equivalently as subsets of $\A$. We will consider them interchangeably for convenience, but in implementation it is more sensible to use the former representation. Formally, we are changing the abstract domain to a parameterized version $$Par(\Sigma) = \{\{\phi_i,\sigma_i\}_i \mid \phi_i \subseteq 2^\A\land \sigma_i \in \Sigma\}$$
We will strengthen this even further by requiring that, for any $\{\phi_i,\sigma_i\}_i \in Par(\Sigma)$, $\{\phi_i\}_i$ forms a disjoint partition of $2^\A$. This ensures that for all $A \subseteq \A$, exactly \emph{one} rule applies to $A$. With this restriction, we may think of $\{\phi_i\}_i$ as \emph{equivalence classes} of assumptions in the sense that if $A_1,A_2 \in \phi_i$, then under the static analysis they are locally equivalent (they yield the same analysis result $\sigma_i$ at that location).

As desired, states in $Par(\Sigma)$ represent functions in $2^\A \to \Sigma$. That is, for $X \in Par(\Sigma)$, its representative function is $$\rho\llbracket X \rrbracket(A) = \sigma_i \text{ such that } (\phi_i,\sigma_i) \in X \land A \in \phi_i$$
which is well-defined since we require that $\{\phi_i\}_i$ forms a disjoint partition of $2^\A$. Observe that lifting the canonical pointwise partial order on functions to $Par(\Sigma)$ is problematic since a single function can be represented by multiple states in $Par(\Sigma)$ (thereby failing antisymmetry). This problem will soon be addressed. 

Another problem is that $X$ may still be exponentially large. We will now present methods for reducing the size of $X$ without changing its representative function. This will not solve the worst-case, which is addressed in Section \ref{sec:approx-merging}.

\subsection{Injective Normal Form}
\label{sec:inf}

We are interested in a unique, ``most compact'' representation of $X \in Par(\Sigma)$. We will write $X \sim Y$ to mean $\rho\llbracket X \rrbracket = \rho\llbracket Y \rrbracket$ (semantic equivalence). Observe that $Par(\Sigma)/\sim$ is isomorphic to $2^\A \to \Sigma$. We now define a normal form on $Par(\Sigma)$ to assign each semantic equivalence class a unique most compact representative.

\begin{definition}[Injective Normal Form (INF)]
    $X \in Par(\Sigma)$ is in INF if, for each $(\phi_i,\sigma_i),(\phi_j,\sigma_j) \in X$ with $i \neq j$, $\sigma_i \neq \sigma_j$ and $\phi_i \neq \emptyset$.
\end{definition}

In other words, $X$ is in INF iff the function $h : X \to \Sigma$ defined by $h(\phi,\sigma) = \sigma$ is injective and $image(\rho\llbracket X \rrbracket) = image(h)$. Intuitively, INF keeps the representation compact in two ways:
\begin{enumerate}
    \item There should only be one equivalence class for each analysis result. If $\sigma_i =\sigma_j$, then $\phi_i$ and $\phi_j$ are describing conditions for the same analysis result, which is redundant.
    \item Every rule should be able to be invoked. If $\phi_i$ is unsatisfiable (considering $\phi_i$ as a predicate), its rule will never be invoked and is thus redundant.
\end{enumerate}

Given any $X \in Par(\Sigma)$, we can reduce it to INF form by repeatedly applying two operations. The first is Exact Merging, which can be performed on any two rules $(\phi_i,\sigma),(\phi_j,\sigma) \in X$ that share the same result state by merging them into one rule $(\phi_i\lor \phi_j, \sigma)$. The second operation is Redundancy Elimination, which can be performed on any rule $(\phi,\sigma) \in X$ such that $\phi$ is unsatisfiable by removing it from $X$.

\floatname{algorithm}{Example}
\textit{Example.} Consider the following program:

\begin{algorithm}[H]
    \begin{algorithmic}
\State $v_1$: $x := \text{input()}$
\State $v_2$: \textbf{assume} $x > 0$
\State $v_3$: $x := 5$
\State $v_4$: \textbf{assume} $x = 0$
\end{algorithmic}
\caption{}
\end{algorithm}
and suppose we are working over the $Interval$ domain, only tracking $x$. Also suppose that, upon entering $v_3$, the parameterized state consists of two rules: $(\llparenthesis v_2\rrparenthesis, [1,\infty])$ and $(\neg\llparenthesis v_2 \rrparenthesis, [-\infty, \infty])$. As will be seen in the following section, but we will leave as a black box for now, applying $x:= 5$ to this state yields rules $(\llparenthesis v_2\rrparenthesis, [5,5])$ and $(\neg\llparenthesis v_2 \rrparenthesis, [5, 5])$, since the value of $x$ was set to 5 regardless of the assumptions taken. In this case, we no longer need to track multiple rules, but rather we can apply Exact Merging to get the single rule $(\llparenthesis v_2 \rrparenthesis \lor \neg\llparenthesis v_2 \rrparenthesis, [5, 5])$.

To illustrate Redundant Elimination, suppose that upon entering $v_2$, the parameterized state contained the rule $(\neg \llparenthesis v_2 \rrparenthesis, [-\infty, \infty])$ (perhaps by having gone through $v_2$ in an iteration of a while loop). Then, to cover the case that we accept the assumption at $v_2$ (as will be seen in the next section, but we will again leave as a black box for now), we may transform this rule into the rule $(\neg \llparenthesis v_2 \rrparenthesis \land \llparenthesis v_2 \rrparenthesis, [1, \infty])$. Clearly, no choice of assumptions will ever satisfy $\neg \llparenthesis v_2 \rrparenthesis \land \llparenthesis v_2 \rrparenthesis$, so this rule can be eliminated.

\begin{lemma}
\label{lem:inf-existence}
    Applying Exact Merging and Redundancy Elimination to any finite $X \in Par(\Sigma)$ to a fixed point results in a state $\hat X \in Par(\Sigma)$ such that $\hat X$ is in INF and $\hat X \sim X$.
\end{lemma}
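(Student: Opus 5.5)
The argument has three parts: termination, preservation of $\sim$ together with the partition invariant, and the fact that a terminal state is in INF. I would use $|X|$, the number of rules, as a measure.

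\textbf{Termination.} Exact Merging replaces two rules by one, and Redundancy Elimination removes one rule. Each therefore decreases $|X|$ by exactly one. Since $X$ is finite, any sequence of applications has length at most $|X|$, so a fixed point $\hat X$ (a state to which neither operation applies) is reached.

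\textbf{Invariance.} I would show that each operation maps an element of $Par(\Sigma)$ whose conditions partition $2^\A$ to another such element with the same representative function $\rho\llbracket\cdot\rrbracket$.
\begin{itemize}
\item \emph{Exact Merging.} The new condition $\phi_i\lor\phi_j$ is, read as a set, $\phi_i\cup\phi_j$. It is disjoint from every other condition because $\phi_i$ and $\phi_j$ each were, so the partition is preserved. For $A\in\phi_i\cup\phi_j$, $\rho$ returned $\sigma$ both before and after the merge. For all other $A$ the applicable rule is unchanged.
\item \emph{Redundancy Elimination.} An unsatisfiable $\phi$ is the empty set of assumption sets. Removing it leaves the union of the conditions equal to $2^\A$ and keeps them disjoint. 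No $A$ was ever routed to that rule, so $\rho$ is unchanged.
\end{itemize}
By induction on the number of steps, $\hat X\in Par(\Sigma)$ and $\hat X\sim X$.

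\textbf{Normal form.} Because neither operation applies to $\hat X$, no two distinct rules share a result, which is the exact-merging condition, and no rule has an unsatisfiable (empty) condition, which is the redundancy condition. This is precisely the definition of INF.

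The only delicate point is bookkeeping: $X$ must be treated as an indexed family of rules, so that "distinct rules" is meaningful and a merge genuinely reduces the count. One must also check that the identification of boolean formulas with subsets of $2^\A$ turns $\lor$ into union and "unsatisfiable" into empty. Once these conventions are fixed, the rest is routine.
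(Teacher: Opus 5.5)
Your proposal is correct and follows the paper's argument. At a fixed point neither operation applies, which is exactly the INF condition, and each single application of Exact Merging or Redundancy Elimination preserves $\rho\llbracket\cdot\rrbracket$, so induction on the number of steps gives $\hat X \sim X$; your termination argument via the rule count and your check that the disjoint-partition invariant is preserved are details the paper leaves implicit, and they make the proof more complete.
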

\begin{proof}
    At a fixed point, Exact Merging and Redundancy Elimination can no longer be applied. Therefore, the resulting state $\hat X$ no longer has any pair of rules that share the same result state, and it no longer has any rule whose condition is unsatisfiable. This satisfies the definition of INF, and therefore $\hat X$ is in INF.

    To show that $X \sim \hat X$, it suffices to show that a single application of Exact Merging or Redundancy Elimination preserves this equivalence.

    Suppose there exists a pair of rules $(\phi_i,\sigma),(\phi_j,\sigma) \in X$ that can be merged via Exact Merging, resulting in a new state $X'$. Then, if $A \in \phi_i\lor \phi_j$, we have $\rho\llbracket X \rrbracket(A) = \sigma = \rho\llbracket X' \rrbracket(A)$. If $A \notin \phi_i\lor \phi_j$, then the equality holds trivially, so $X \sim X'$.

    Now instead suppose that $X'$ was reached by applying Redundant Elimination on $(\phi, \sigma) \in X$. In this case, $\phi$ is empty, so $\rho\llbracket X \rrbracket(A) = \rho\llbracket X' \rrbracket(A)$ for all $A$ since $A \notin \phi$. Thus, $X \sim X'$.

    Inductively, this implies that $X \sim \hat X$.
\end{proof}

Observe that the previous lemma did not require Exact Merging or Redundancy Elimination to be performed in any particular order. In the proof of the following lemma, we also show that no matter what order these operations are performed in, they will always lead to the same state.

\begin{lemma}
    For all $X \in Par(\Sigma)$, there exists a unique $\hat X \in Par(\Sigma)$ such that $\hat X$ is in INF and $\hat X \sim X$.
\end{lemma}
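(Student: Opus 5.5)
Existence comes from Lemma \ref{lem:inf-existence}, so the real work is uniqueness. The plan is to show that an INF representative is completely determined by the representative function $\rho\llbracket X \rrbracket$. From that, any two INF states $\hat X,\hat Y$ with $\hat X \sim X \sim \hat Y$ must coincide as sets of rules.

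For existence, we apply Exact Merging and Redundancy Elimination until neither applies. We need termination, which we get from finiteness: each operation strictly decreases the number of rules. Since $\{\phi_i\}_i$ partitions the finite set $2^\A$, at most $2^{|\A|}$ rules have nonempty conditions, and Redundancy Elimination removes the others, so $X$ may be taken finite. Lemma \ref{lem:inf-existence} then gives an INF state $\hat X \sim X$.

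For uniqueness, let $f = \rho\llbracket X \rrbracket$ and suppose $Y$ is in INF with $Y \sim X$. We claim
\[ Y = \{(f^{-1}(\sigma),\sigma) \mid \sigma \in image(f)\}. \]
First take a rule $(\phi,\sigma) \in Y$. Since $\phi \neq \emptyset$, pick $A \in \phi$; then $f(A) = \sigma$, so $\sigma \in image(f)$. We show $\phi = f^{-1}(\sigma)$ in two steps.
\begin{itemize}
\item $\phi \subseteq f^{-1}(\sigma)$: this holds because $\rho\llbracket Y \rrbracket = f$.
\item $f^{-1}(\sigma) \subseteq \phi$: given $A'$ with $f(A') = \sigma$, $A'$ lies in a unique cell $\phi'$ of $Y$ with rule $(\phi',\sigma')$, and $\sigma' = f(A') = \sigma$. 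Injectivity of results in INF forces this to be the same rule, so $A' \in \phi$.
\end{itemize}
Conversely, for each $\sigma \in image(f)$, pick $A$ with $f(A) = \sigma$. The rule of $Y$ whose cell contains $A$ has result $\sigma$, so it is $(f^{-1}(\sigma),\sigma)$ by the argument above. Hence $Y$ equals the displayed set, which depends only on $f$. Two INF representatives of $X$ therefore coincide. In particular, every order of applying the two operations ends at this same state, which matches the remark preceding the lemma.

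The only delicate point is the meaning of ``unique'': rules are compared via their conditions viewed as subsets of $2^\A$, not as syntactic Boolean formulas. I would state this explicitly, since $\llparenthesis v_2\rrparenthesis \lor \neg\llparenthesis v_2\rrparenthesis$ and $\top$ must count as the same condition. With that convention the argument above is complete, and the set-based definition of $Par(\Sigma)$ supports it.
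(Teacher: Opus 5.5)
Your proof is correct and rests on the same key fact as the paper's: injectivity of result states together with nonempty cells forces every cell of an INF representative to be exactly a fiber of $\rho\llbracket X \rrbracket$. You argue this directly through the explicit canonical form $\{(f^{-1}(\sigma),\sigma) \mid \sigma \in image(f)\}$, whereas the paper supposes two distinct INF representatives and derives a contradiction from some $A' \in \phi \setminus \psi$. Your version is cleaner, and since that explicit set is itself in INF and represents $f$, it also yields existence directly, which sidesteps the finiteness hypothesis of Lemma~\ref{lem:inf-existence} that you had to patch separately.
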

\begin{proof}
    Existence is guaranteed by Lemma \ref{lem:inf-existence}. To show uniqueness, suppose there exist two states $\hat X_1,\hat X_2 \in Par(\Sigma)$ in INF such that $\hat X_1 \sim X \sim \hat X_2$ and $\hat X_1 \neq \hat X_2$. Let $\hat X_1 = \{\phi_i,\sigma_i\}_i$ and $\hat X_2 = \{\psi_i,\delta_i\}$. Since $\hat X_1 \sim \hat X_2$, we have $\rho\llbracket \hat X_1 \rrbracket(A) = \rho\llbracket \hat X_2 \rrbracket(A)$ for all $A \subseteq \A$, meaning there exists a unique (by the first condition of INF) $(\phi,\sigma) \in \hat X_1$ and $(\psi,\delta) \in \hat X_2$ such that $\sigma = \delta$ and $A \in \phi$ and $A \in \psi$. There must exist such an $A$ that $\phi \neq \psi$, since otherwise would imply that either $\hat X_1 = \hat X_2$ or $\hat X_1$, without loss of generality, has a pair $(\phi_i,\sigma_i)$ such that $\phi_i$ is unsatisfiable ($\sigma_i$ is unrealizable by the representative function). The latter possibility contradicts INF form, and the former possibility contradicts the assumption that $\hat X_1 \neq \hat X_2$. Therefore, there is such a pair of rules where $\phi \neq \psi$. This implies that there exists some $A' \in \phi$ but $A' \notin \psi$ (without loss of generality). Then, $$\rho\llbracket \hat X_1 \rrbracket (A') = \sigma = \delta \neq \rho\llbracket \hat X_2 \rrbracket(A')$$ since $\hat X_2$ is in INF, so it cannot have another rule with the result state $\delta$. This contradicts $\hat X_1 \sim \hat X_2$, thus concluding the proof.
\end{proof}

\begin{corollary}
    For all $X,Y \in Par(\Sigma)$, $X \sim Y$ iff $\hat X = \hat Y$.
\end{corollary}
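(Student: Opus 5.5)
The plan is to derive the corollary directly from the preceding lemma, which gives existence and uniqueness of an INF representative in each semantic equivalence class. Here $\hat X$ denotes the unique INF state with $\hat X \sim X$. I will use two facts: that $\sim$ is an equivalence relation, and the uniqueness clause of the previous lemma. The relation $\sim$ is reflexive, symmetric and transitive because it is defined as equality of the representative functions $\rho\llbracket \cdot \rrbracket$, and equality of functions has these properties.

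For the forward direction, suppose $X \sim Y$. By the definition of $\hat Y$ we have $\hat Y \sim Y$, and transitivity then gives $\hat Y \sim X$. So $\hat Y$ is a state in INF that is semantically equivalent to $X$. By the previous lemma there is exactly one such state, namely $\hat X$, and hence $\hat X = \hat Y$.

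For the converse, suppose $\hat X = \hat Y$. Then $X \sim \hat X$ and $\hat Y \sim Y$. Since $\hat X = \hat Y$, we have $\rho\llbracket X \rrbracket = \rho\llbracket \hat X \rrbracket = \rho\llbracket \hat Y \rrbracket = \rho\llbracket Y \rrbracket$, which is exactly $X \sim Y$.

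I do not expect a real obstacle. The only point needing care is that the map $X \mapsto \hat X$ is well defined, which is precisely the uniqueness half of the previous lemma. It is also worth saying explicitly that $\hat X$ refers to that unique INF representative, and not merely to whatever state results from a particular sequence of Exact Merging and Redundancy Elimination steps as in Lemma \ref{lem:inf-existence}. With uniqueness in hand, both directions are one-line consequences of the equivalence-relation properties of $\sim$.
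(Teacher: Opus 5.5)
Your proposal is correct and matches the paper, which gives no separate proof and treats the corollary as an immediate consequence of the existence-and-uniqueness lemma for INF representatives. Your two directions spell out exactly the argument left implicit there: transitivity plus uniqueness gives $X \sim Y \Rightarrow \hat X = \hat Y$, and the chain $X \sim \hat X = \hat Y \sim Y$ gives the converse, with well-definedness of $X \mapsto \hat X$ supplied by the uniqueness half of that lemma.
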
 

It is then evident that restricting $Par(\Sigma)$ to INF states is equivalent to $Par(\Sigma)/\sim$, with the isomorphism between $Par(\Sigma)/\sim$ and $2^\A \to \Sigma$ being defined by $\rho$. Now, we may use the canonical partial order from $2^\A \to \Sigma$ on our parameterized domain, and it becomes a complete lattice.

\subsection{Analysis Constraint Functions}

As stated earlier, we assume an existing analysis defined as the fixed point of a forward transformer $\vec f$, which can be expanded $$\vec f(\sigma_1,\ldots,\sigma_n) = (f_{v_1}(\sigma_1,\ldots,\sigma_n),\ldots,f_{v_n}(\sigma_1,\ldots,\sigma_n)).$$

To lift the forward transformer to $Par(\Sigma)^n$, we first perform a trivial lift on the constraint functions $f_{v_i}$ for all $v_i$ that are not assume statements. That is, if $v\notin \A$, then we lift it to the function $pf_{v} : Par(\Sigma)^n \to Par(\Sigma)$ by lifting $t_{v}$ to $$pt_{v}(X) = \{(\phi_i,t_{v_i}(\sigma_i)) \mid (\phi_i,\sigma_i) \in X\}$$
That is, $v_i$ transforms the result state of all given rules in the same way it would typically transform a program state. If instead $v_i$ is an assume statement, we need to modify the set of rules. For $a \in \A$, we will define $splitPair_a(\phi,\sigma)$ as the smallest function satisfying
\begin{align*}
    Sat(\phi\land\llparenthesis a \rrparenthesis) &\implies (\phi \land \llparenthesis a \rrparenthesis, \sigma \sqcap \pi_a) \in splitPair_a(\phi,\sigma)\\
    Sat(\phi\land\neg\llparenthesis a \rrparenthesis) &\implies (\phi \land \neg\llparenthesis a \rrparenthesis, \sigma) \in splitPair_a(\phi,\sigma)
\end{align*}

Intuitively, if the condition $\phi \land \llparenthesis a \rrparenthesis$ is satisfiable (realizable), then we may non-deterministically choose to enforce the assumption $a$. To do this, we update the condition to indicate that this new rule is conditional on using $a$, and we enforce the assumption on the result state $\sigma$ as discussed previously. A similar idea follows for $\phi \land \neg \llparenthesis a \rrparenthesis$, in that we indicate that the assumption is \emph{not} being taken, and the result state is kept the same (since there is no assumption being enforced). It is important to note that $\neg \llparenthesis a \rrparenthesis$ represents the fact that $a$ is not being considered in this branch, not that the assumption described in $a$ is false. Indeed, it may not be possible to enforce that $a$ is false unless $\Sigma$ is a Boolean lattice, wherein the program state can be restricted by the complement $\overline{\pi_p}$.

Looking at both cases together, if $splitPair_a(\phi,\sigma) = \{(\phi \land \llparenthesis a \rrparenthesis, \sigma \sqcap \pi_p),(\phi \land \neg\llparenthesis a \rrparenthesis, \sigma)\}$, it is clear that we are performing a split on the input rule. This represents the non-deterministic choice: either we accept the assumption or we do not. In implementation, $splitPair_a$ may check if $\phi \Rightarrow \llparenthesis a \rrparenthesis$ (and similarly for the second case). If this holds, then there is no need to conjunct $\llparenthesis a \rrparenthesis$, thereby keeping the condition smaller and less complex.

To perform splitting on a parameterized state $X \in Par(\Sigma)$, we define the natural extension $$split_a(X) = \bigcup_{(\phi,\sigma) \in X} splitPair_a(\phi,\sigma)$$
It is important to note that, if $\{\phi_i\}_i$ is a disjoint partition of $2^\A$, then $split_a$ also produces conditions that form a disjoint partition of $2^\A$. This property enables us to safely state that $split_a$ is indeed a map in $Par(\Sigma) \to Par(\Sigma)$, and this disjoint partition requirement is never violated.

\begin{lemma}
    For any $a \in \A$ and $X \in Par(\Sigma)$, let $split_a(X) = \{(\phi_i',\sigma_i')\}_i$. Then, $\{\phi_i'\}_i$ is a disjoint partition of $2^\A$, and thus $split_a(X) \in Par(\Sigma)$.
\end{lemma}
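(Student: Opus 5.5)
The argument treats each condition $\phi$ as a subset of $2^\A$. Then $\phi \land \llparenthesis a \rrparenthesis$ is $\phi \cap \{A \subseteq \A \mid a \in A\}$ and $\phi \land \neg\llparenthesis a\rrparenthesis$ is $\phi \cap \{A \subseteq \A \mid a \notin A\}$. Satisfiability of a condition simply means the corresponding subset is nonempty. Write $X = \{(\phi_i,\sigma_i)\}_i$, where by hypothesis $\{\phi_i\}_i$ is a disjoint partition of $2^\A$. The plan is to prove \emph{coverage} and \emph{disjointness} of the new conditions separately.

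For coverage, I take an arbitrary $A \subseteq \A$. There is a unique $i$ with $A \in \phi_i$. If $a \in A$, then $A \in \phi_i \land \llparenthesis a\rrparenthesis$, so this condition is satisfiable. By the defining implication of $splitPair_a$, the rule $(\phi_i \land \llparenthesis a \rrparenthesis, \sigma_i \sqcap \pi_a)$ lies in $split_a(X)$, and its condition contains $A$. The case $a \notin A$ is symmetric, using the second implication. So every $A$ lies in some $\phi'_j$.

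For disjointness, I first note that because $splitPair_a$ is the \emph{smallest} function satisfying the two implications, each produced rule has one of the two listed forms, and these are nonempty when present. I then take two distinct produced rules and argue that their conditions cannot share an element $A$. If they originate from different parents $\phi_i \neq \phi_k$, their conditions are subsets of $\phi_i$ and $\phi_k$, which are disjoint. If they come from the same parent $\phi_i$, one must be the $\llparenthesis a\rrparenthesis$-branch and the other the $\neg\llparenthesis a\rrparenthesis$-branch. No $A$ can both contain and omit $a$, so these are disjoint as well.

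The main subtlety is that $split_a(X)$ is a \emph{set}, and distinct index pairs might produce identical pairs $(\phi',\sigma')$ that collapse. I would handle this by observing that identical conditions from different parents would have to be a nonempty common subset of disjoint sets, which is impossible. Hence the produced rules are in bijection with the (parent, branch) pairs that survive the satisfiability check. Together, coverage and pairwise disjointness (with all parts nonempty) give that $\{\phi'_i\}_i$ partitions $2^\A$. By the definition of $Par(\Sigma)$, this yields $split_a(X) \in Par(\Sigma)$.
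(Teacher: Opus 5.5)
Your proof is correct and takes essentially the same route as the paper: existence comes from the unique parent class containing $A$ together with the branch selected by whether $a \in A$, and uniqueness (your pairwise disjointness) comes from distinct parents being disjoint while the two branches of one parent are complementary. Your extra remarks, that every produced condition is nonempty and that identical pairs cannot collapse in the set $split_a(X)$, are a small tidy-up that the paper leaves implicit.
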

\begin{proof}
    Let $X = \{(\phi_i,\sigma_i)\}_i$. Then, $\{\phi_i\}_i$ is a disjoint partition of $2^\A$. It suffices to show that, for all $A \subseteq \A$, there exists a unique $\phi_i'$ such that $A \in \phi_i'$. We first show existence. Let $\phi_i$ be the unique condition in $\{\phi_i\}_i$ such that $A \in \phi_i$. Then w.l.o.g., if $a \in A$, $A \in (\phi_i \land \llparenthesis a \rrparenthesis)$ and this also implies that $Sat(\phi_i \land \llparenthesis a \rrparenthesis)$, so $\phi_i \land \llparenthesis a \rrparenthesis \in \{\phi_i'\}_i$, and thus we have existence. For uniqueness, observe that since $A \notin \phi_j$ for all $j \neq i$, $A \notin \phi_j \land \llparenthesis a\rrparenthesis$ and $A \notin \phi_j \land \neg\llparenthesis a\rrparenthesis$. Since this constitutes all other possible conditions in $\{\phi_i'\}_i$ (besides $\phi_i \land \neg\llparenthesis a \rrparenthesis$, which $A$ is not contained in since $a \in A$), we have uniqueness.
\end{proof}

We may now define the transformer of an assume statement. Let $v$ be the CFG node of an assume statement. We define $pt_v$ as $$pt_v(X) = split_v(X)$$ and we define the constraint function $pf_v$ for all nodes as $$pf_v(X_1,\ldots,X_n) = INF\left(\bigsqcup_{v_i \in pre(v)} pt_v(X_i)\right)$$
That is, whenever we encounter an assume statement, we perform non-deterministic case-splitting on that assumption for all input states (and compress to INF). The new system of equations becomes $$\vec{pf}(X_1,\ldots,X_n) = (pf_{v_1}(X_1,\ldots,X_n),\ldots,pf(X_1,\ldots,X_n))$$ and its least solution $\lfp{\vec{pf}} \in Par(\Sigma)^n$ represents the desired function in $2^\A \to \Sigma$ for each CFG node. We now prove correctness of this claim.

\subsection{Correctness}

Let $P$ be a program. We will write $P|_A$ for $A \subseteq \A$ to mean the program $P$ keeping only the assume nodes $A$. That is, all assume statements not in $A$ are replaced with $\assume{true}$. We will also write $\llbracket P \rrbracket_{par}$ to mean the representative function of the parameterized analysis result from parameterizing an analysis $\llbracket \cdot \rrbracket$ (as done in the previous section). The following theorem states that, for any choice of assumptions $A \subseteq \A$, the parameterized analysis correctly computes the static analysis of $P$ under the assumptions $A$.

\begin{theorem}
\label{thm:correctness}
    Given any program $P$ and any set of assumptions $A \subseteq \A$, $\llbracket P \rrbracket_{par}(A) = \llbracket P|_A \rrbracket$.
\end{theorem}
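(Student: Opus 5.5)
We prove the claim by Kleene iteration. Let $\vec g$ denote the forward transformer of $P|_A$, so that $\llbracket P|_A \rrbracket = \bigsqcup_{k \geq 0} \vec g^k(\bot)$. Since the paper assumes the transformers $f_{v_i}$ are $\omega$-continuous, this characterization is available. The lifted transformer $\vec{pf}$ acts on $Par(\Sigma)^n$, which (via $\rho$ and INF) is the complete lattice $(2^\A \to \Sigma)^n$ with the pointwise order. The plan is to show, by induction on $k$, that evaluation at $A$ commutes with iteration:
\[
\rho\llbracket \vec{pf}^k(\bot) \rrbracket(A) = \vec g^k(\bot)
\]
for every $k$. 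Here $\rho$ is applied componentwise and $\bot$ is the constant function to $\bot$, which in INF is the single rule $(2^\A, \bot)$. Passing to suprema then gives the theorem.

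The key lemma is a one-step commutation. For all $X_1,\ldots,X_n \in Par(\Sigma)$ and each node $v$,
\[
\rho\llbracket pf_v(X_1,\ldots,X_n)\rrbracket(A) = g_v\bigl(\rho\llbracket X_1\rrbracket(A),\ldots,\rho\llbracket X_n\rrbracket(A)\bigr).
\]
We check this in three cases.
\begin{itemize}
\item If $v \notin \A$, then $pt_v$ applies $t_v$ to every rule's result. The rule containing $A$ is therefore mapped to $t_v$ of its result, so $\rho\llbracket pt_v(X)\rrbracket(A) = t_v(\rho\llbracket X\rrbracket(A))$.
\item If $v \in A$, the split lemma shows that $A$ lies in the condition $\phi \land \llparenthesis v\rrparenthesis$, whose result is $\sigma \sqcap \pi_v$. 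This is exactly $t_v$ for $\assume{p}$ in $P|_A$.
\item If $v \in \A \setminus A$, then $A$ lies in $\phi \land \neg\llparenthesis v\rrparenthesis$ with result $\sigma$, which matches the transformer of $\assume{true}$, namely $\sigma \sqcap \top = \sigma$.
\end{itemize}
Two further facts finish the lemma. First, the join in $Par(\Sigma)$ is the pointwise join of representative functions, so evaluation at $A$ commutes with $\bigsqcup_{v_i \in pre(v)}$. Second, INF preserves $\sim$ by Lemma \ref{lem:inf-existence}. We take $pre(v)$ to be the same in $P$ and $P|_A$, for instance the syntactic over-approximation $\{v' \mid v' \to v\}$, or fixed independently of the assumptions.

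The induction then follows directly from the lemma. For the limit step we need
\[
\rho\Bigl\llbracket \bigsqcup_k \vec{pf}^k(\bot)\Bigr\rrbracket(A) = \bigsqcup_k \rho\llbracket \vec{pf}^k(\bot)\rrbracket(A),
\]
which holds because suprema in $2^\A \to \Sigma$ are pointwise. We also need $\lfp{\vec{pf}} = \bigsqcup_k \vec{pf}^k(\bot)$. For this we show that $\vec{pf}$ is $\omega$-continuous. By the one-step lemma, $\vec{pf}$ viewed as a map on $(2^\A\to\Sigma)^n$ is the pointwise lift of the family $\vec g_A$ over $A \subseteq \A$. Each $\vec g_A$ is $\omega$-continuous, because meet with a fixed $\pi_p$ is assumed continuous as a transformer. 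Since chains and suprema are pointwise, the lift is $\omega$-continuous as well.

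The main obstacle is making the treatment of $Par(\Sigma)$ precise. The order and join are defined only on INF representatives, through the isomorphism with $2^\A \to \Sigma$. So we must justify that $pt_v$ and the join are well defined on $\sim$-classes, that is, that they respect $\sim$, and that the paper's $\bigsqcup$ on rule sets coincides with the pointwise join. We would handle this by working entirely on representative functions: define each operation semantically, and use the one-step lemma both to show that the rule-level operations realize these semantic operations and to transfer the fixpoint characterization.
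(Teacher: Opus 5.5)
Your proposal is correct and follows essentially the same route as the paper: you induct on the Kleene iterates to show $\rho\llbracket \vec{pf}^{k}(\bot)\rrbracket(A) = \vec f_A^{k}(\bot)$, using the same case split (non-assume node, assume node with $v \in A$, assume node with $v \notin A$) and the same appeal to Lemma~\ref{lem:inf-existence} and the pointwise join. Your extra points make explicit what the paper's proof takes for granted: that $\vec{pf}$ is $\omega$-continuous as a pointwise lift so Kleene applies, that the limit step commutes with evaluation at $A$, and that $pre(v)$ must coincide in $P$ and $P|_A$.
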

\begin{proof}
    Since $\llbracket P \rrbracket_{par} = \bigsqcup_{k\geq 0} \vec{pf}\,^k(\bot)$ and $\llbracket P|_A \rrbracket = \bigsqcup_{k\geq 0}\vec f^k_A(\bot)$ (by the $\omega$-continuity assumption) where $\vec f_A$ is the forward transformer for $P|_A$, it suffices to prove that $\vec{pf}\,^k(\bot) = \vec f^k_A(\bot)$ for all $k \geq 0$. We will do so by induction.

    The base case, $k = 0$, is trivial. Let $k \geq 1$. Then, by the inductive hypothesis we have $\rho\llbracket \vec{pf}\,^{k-1}(\bot)\rrbracket (A) = \vec f^{k-1}_A(\bot)$. Let $\vec f^{k-1}_A(\bot) = (\sigma_1,\ldots,\sigma_n)$ and $\vec{pf}\,^{k-1}(\bot) = (X_1,\ldots,X_n)$, and let
    \begin{align*}
        \vec{pf}\,^{k}(\bot) &= (X_1',\ldots,X_n') = (pf_{v_1}(X_1,\ldots,X_n),\ldots,pf_{v_n}(X_1,\ldots,X_n))\\
        \vec f^k_A(\bot) &= (\sigma_1',\ldots,\sigma_n') = (f_{v_1}(\sigma_1,\ldots,\sigma_n),\ldots,f_{v_n}(\sigma_1,\ldots,\sigma_n))
    \end{align*}
    We will show that $\rho\llbracket X_i'\rrbracket(A) = \sigma_i'$ for all $i \in [n]$. There are two cases to consider. The first case is that $v_i$ is not an assume statement ($v_i \notin \A$). Here, $$\sigma_i' = \bigsqcup_{v_j \in pre(v_i)} t_{v_i}(\sigma_j) = \bigsqcup_{v_j \in pre(v_i)} t_{v_i}(\rho\llbracket X_j\rrbracket (A))$$ by the inductive hypothesis, and $$X_i' \sim\bigsqcup_{v_j \in pre(v_i)} pt_{v_i}(X_j)$$ (by Lemma \ref{lem:inf-existence}) so $$\rho\llbracket X_i'\rrbracket(A) = \bigsqcup_{v_j \in pre(v_i)} \rho\llbracket pt_{v_i}(X_j)\rrbracket (A)= \bigsqcup_{v_j \in pre(v_i)} t_{v_i}(\rho\llbracket X_j\rrbracket(A))$$
    by isomorphism of $\rho$ and definition of $pt_{v_i}$ for $v_i \notin \A$. Thus, $\rho\llbracket X_i'\rrbracket(A) = \sigma_i'$ in this case.

    The second case is that $v_i \in \A$. For all $j$, let $(\phi,\sigma_j) \in X_j$ be the unique rule such that $A \in \phi$ (the result state is $\sigma_j$ since $\rho\llbracket X_j\rrbracket(A) = \sigma_j$). Then, if $v_i \in A$, $$\sigma_i' = \bigsqcup_{v_j \in pre(v_i)}\sigma_j \sqcap \pi_{p}$$ and $$\rho\llbracket X_i'\rrbracket(A) = \bigsqcup_{v_j\in pre(v_i)} \rho\llbracket split_{v_i}(X_j)\rrbracket(A) = \bigsqcup_{v_j\in pre(v_i)} \sigma_j \sqcap \pi_p$$
    If $v_i \notin A$, we similarly have
    \begin{align*}
        \sigma_i' &= \bigsqcup_{v_j \in pre(v_i)}\sigma_j\\
        \rho\llbracket X_i'\rrbracket(A) &= \bigsqcup_{v_j\in pre(v_i)} \sigma_j
    \end{align*}
    In both cases, $\rho\llbracket X_i'\rrbracket(A) = \sigma_i'$.
\end{proof}

\paragraph{Remark.} The $\omega$-continuity requirement on the forward transformers helps in the proof where the analysis is exact w.r.t. the domain being analyzed over, and is not a restricting requirement. It seems restricting, however, when analyses desire to use techniques like widening that do not compute the least fixed point of the forward transformer. While we do not completely formalize it, these cases do not actually break the theorem as seen in the proof, since the parameterized analysis and the original analysis under $A$ are equivalent in ``lock-step'' regardless of the restrictions put on the forward transformer. While in these cases they will not compute the least fixed point as assumed by the theorem, the analysis results will be the same so long as they are iterative.

\subsubsection{Soundness with Abstract Interpreters}

In practice, we will want to parameterize the abstract domain rather than the concrete one. Of course, we may simply replace $\Sigma$ with the abstract domain $\Sigma^\sharp$. We ought to ensure that the resulting parameterized analysis result is sound w.r.t. the parameterized analysis result under the concrete domain, where $\alpha : \Sigma \to \Sigma^\sharp$ is the abstraction function of the corresponding Galois connection.

Let $\llbracket \cdot \rrbracket^\sharp$ be the result of a sound abstract interpretation and let $\llbracket \cdot \rrbracket_{par}^\sharp$ be the analysis result from parameterizing that sound abstract interpretation.

\begin{theorem}
    Given any program $P$ and any set of assumptions $A \subseteq \A$, $\alpha(\llbracket P|_A\rrbracket) \sqsubseteq \llbracket P \rrbracket_{par}^\sharp(A)$
\end{theorem}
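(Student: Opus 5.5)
The plan is to reduce the statement to Theorem \ref{thm:correctness} applied twice, once per domain, and then to use the soundness of the underlying abstract interpreter on the restricted program $P|_A$. Theorem \ref{thm:correctness} never uses any property of $\Sigma$ beyond its being a complete lattice with $\omega$-continuous transformers; in particular, the assume transformer is only needed in the form $\sigma \mapsto \sigma \sqcap \pi_p$. So I will apply it verbatim with $\Sigma^\sharp$ in place of $\Sigma$, taking the abstract assume transformer to be $t^\sharp_v(\sigma^\sharp) = \sigma^\sharp \sqcap^\sharp \pi^\sharp_p$. This gives
\[
\llbracket P \rrbracket^\sharp_{par}(A) = \llbracket P|_A \rrbracket^\sharp .
\]
Here I should check one point. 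The abstract analysis of $P|_A$ treats the assume nodes outside $A$ as $\assume{true}$, that is, as the identity. This agrees with the $\neg\llparenthesis a\rrparenthesis$ branch of $splitPair_a$, which leaves $\sigma$ unchanged. So the lock-step induction in the proof of Theorem \ref{thm:correctness} goes through unchanged over $\Sigma^\sharp$.

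Next I use the hypothesis that $\llbracket\cdot\rrbracket^\sharp$ is a sound abstract interpretation, meaning it is fixpoint sound for \emph{all} programs. Since $P|_A$ is itself a program, soundness gives $\alpha(\llbracket P|_A\rrbracket) \sqsubseteq \llbracket P|_A\rrbracket^\sharp$. Chaining this with the equality above yields the claim.

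The main obstacle is whether $P|_A$ really is a program of the language on which $\llbracket\cdot\rrbracket^\sharp$ is assumed sound. That language must include assume statements, with an abstract transformer that soundly over-approximates the concrete meet with $\pi_p$. If the hypothesis is read that way, nothing remains to prove. If instead soundness is given only through local conditions on the transformers, I would prove the missing piece directly. First I would require that $\pi_p$ have an abstract counterpart $\pi^\sharp_p$ with $\alpha(\pi_p) \sqsubseteq \pi^\sharp_p$. Then I would show local soundness of the assume transformer:
\[
\alpha(\sigma \sqcap \pi_p) \sqsubseteq \alpha(\sigma) \sqcap^\sharp \alpha(\pi_p) \sqsubseteq \alpha(\sigma)\sqcap^\sharp \pi^\sharp_p .
\]
This follows from monotonicity of $\alpha$, which comes from the Galois connection. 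Finally I would run the standard Kleene-iterate induction, $\alpha(\vec f_A^{\,k}(\bot)) \sqsubseteq \vec f_A^{\sharp\,k}(\bot)$, and pass to the limit using that $\alpha$ preserves joins (it is a lower adjoint). This recovers fixpoint soundness for $P|_A$, and therefore the theorem.

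Alternatively, the induction could be done directly on the parameterized iterates, comparing $\rho\llbracket \vec{pf}^{\,k}(\bot)\rrbracket(A)$ across the two domains. Going through Theorem \ref{thm:correctness} avoids dealing with INF and the partition structure entirely, so I prefer that route.
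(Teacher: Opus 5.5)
Your main argument is the paper's proof: apply Theorem \ref{thm:correctness} over $\Sigma^\sharp$ to get $\llbracket P \rrbracket_{par}^\sharp(A) = \llbracket P|_A\rrbracket^\sharp$, then use fixpoint soundness of $\llbracket\cdot\rrbracket^\sharp$ on the program $P|_A$. Your fallback via local soundness of the abstract assume transformer is correct but unnecessary, since the paper defines soundness as fixpoint soundness for all programs; note also that you use Theorem \ref{thm:correctness} only once, not once per domain as your opening sentence says.
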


\begin{proof}
    By Theorem \ref{thm:correctness}, we have $\llbracket P \rrbracket_{par}^\sharp(A) = \llbracket P|_A\rrbracket^\sharp$. Then, since $\llbracket \cdot \rrbracket^\sharp$ is sound w.r.t. $\llbracket \cdot \rrbracket$, we have $$\alpha(\llbracket P|_A\rrbracket) \sqsubseteq \llbracket P|_A\rrbracket^\sharp = \llbracket P \rrbracket_{par}^\sharp(A)$$
\end{proof}

\subsection{Approximate Merging}
\label{sec:approx-merging}
As frequently mentioned in prior sections, we cannot avoid exponentiality while maintaining precision. Therefore, we will forego some precision in exchange for sub-exponential state representations.

In Section \ref{sec:inf}, we discussed methods for reducing a parameterized state $X \in Par(\Sigma)$ to an equivalent and more compact form (INF). We will introduce a new reduction operation similar to Exact Merging, except that it is applicable to any pair of rules.

\paragraph{Approximate Merging.} Given two rules $(\phi_i,\sigma_i),(\phi_j,\sigma_j) \in X$, we can merge them into a single rule $(\phi_i \lor \phi_j, \sigma_i \sqcup \sigma_j)$. This rule intuitively states ``If $A$ satisfies $\phi_i$ \emph{or} $\phi_j$, then its resulting state is \emph{either} $\sigma_i$ \emph{or} $\sigma_j$.'' Of course, some of this information is lost since the joined state cannot recover the two original states.

The result of applying Approximate Merging is an over-approximation of the initial state.

\begin{theorem}
    Let $X \in Par(\Sigma)$ be non-empty. Then, applying Approximate Merging to any two rules in $X$ results in a state $X'$ such that $X \sqsubseteq X'$.
\end{theorem}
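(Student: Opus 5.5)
The order on $Par(\Sigma)$ is the one inherited from the function space $2^\A \to \Sigma$ via $\rho$: $X \sqsubseteq X'$ means $\rho\llbracket X \rrbracket(A) \sqsubseteq \rho\llbracket X' \rrbracket(A)$ for every $A \subseteq \A$. The plan is therefore to fix an arbitrary $A \subseteq \A$ and compare $\rho\llbracket X \rrbracket(A)$ with $\rho\llbracket X' \rrbracket(A)$ directly. If one prefers to compare INF representatives, the Corollary and Lemma \ref{lem:inf-existence} show that normalizing does not change $\rho$, so nothing is lost by working with $\rho$ throughout.

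First I will check that $X'$ is a legitimate element of $Par(\Sigma)$. Write $X'$ as $X$ with the rules $(\phi_i,\sigma_i)$ and $(\phi_j,\sigma_j)$, $i\neq j$, replaced by $(\phi_i\lor\phi_j,\sigma_i\sqcup\sigma_j)$. Since $\phi_i$ and $\phi_j$ are disjoint blocks of a partition of $2^\A$, replacing them by their union still gives a disjoint partition. So $\rho\llbracket X' \rrbracket$ is well defined.

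Next comes a case split on which block of the partition contains $A$.

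\emph{Case 1: $A \in \phi_i \lor \phi_j$.} Without loss of generality $A\in\phi_i$. Then $\rho\llbracket X \rrbracket(A) = \sigma_i$. The unique rule of $X'$ covering $A$ is the merged rule, so $\rho\llbracket X' \rrbracket(A) = \sigma_i\sqcup\sigma_j$. The inequality $\sigma_i \sqsubseteq \sigma_i \sqcup \sigma_j$ is just the defining property of the join in $\Sigma$.

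\emph{Case 2: $A \notin \phi_i\lor\phi_j$.} Then $A$ lies in some $\phi_k$ with $k\neq i,j$. That rule occurs unchanged in $X'$, so the two values are equal.

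In both cases the pointwise inequality holds, so $X \sqsubseteq X'$.

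I do not expect any real obstacle. The one point needing care is that the partial order on $Par(\Sigma)$ was only defined through the isomorphism with $2^\A\to\Sigma$, so the argument must be phrased via $\rho$. The hypothesis that $X$ is non-empty with two rules available (and the merged rules distinct) is only used to make sense of ``merging two rules''.
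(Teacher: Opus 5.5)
Your proof is correct and matches the paper's argument: you compare pointwise via $\rho$ and split on whether $A \in \phi_i \lor \phi_j$, using $\sigma_i \sqsubseteq \sigma_i \sqcup \sigma_j$ in the first case and equality in the second. Your extra check that the merged conditions still partition $2^\A$, so that $\rho\llbracket X' \rrbracket$ is well defined, is a detail the paper leaves implicit.
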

\begin{proof}
    Suppose we apply Approximate Merging to $(\phi_i,\sigma_i),(\phi_j,\sigma_j) \in X$, resulting in $(\phi_i \lor \phi_j, \sigma_i \sqcup \sigma_j)$. Then, for all $A \subseteq \A$, if $A \in \phi_i \lor \phi_j$, it is immediate that $$\rho\llbracket X \rrbracket (A) \sqsubseteq \sigma_i \sqcup \sigma_j = \rho\llbracket X' \rrbracket (A)$$
    On the other hand, if $A \notin \phi_i \lor \phi_j$, then $\rho\llbracket X \rrbracket (A) = \rho\llbracket X' \rrbracket (A)$.
\end{proof}

One may decide to apply Approximate Merging arbitrarily, but it is better to do it smartly. Ideally, we want to lose as little precision as possible. Suppose we have some function $Loss(\sigma_i,\sigma_j)$ that allows us to measure the amount of precision lost between those input states and their join $\sigma_i \sqcup\sigma_j$. Ideally, when performing Approximate Merging we would like to choose $(\phi_i,\sigma_i),(\phi_j,\sigma_j)$ that minimizes $Loss(\sigma_i,\sigma_j)$. Future works may consider designing data structures to efficiently support this for certain classes of domains by taking advantage of some structure.

\paragraph{Additional Remarks.} 
The use of Approximate Merging can be compared to widening. Widening sacrifices precision for termination and speed, while Approximate Merging sacrifices for precision (but termination is not affected when $\A$ is finite). Fortunately, in most applications, the number of assumptions is small enough and INF reduction does enough work to keep representation computationally reasonable. Approximate Merging, in most cases, is then a reduction that only needs to be applied conservatively at some threshold determined by the developer. One should also be careful about pre-emptively applying Approximate Merging, since using it too early can cause a loss of precision that could have been avoided in later iterations by Exact Merging. For instance, suppose $X$ contains far too many rules, so Approximate Merging is applied. It could have been the case that, had Approximate Merging not been applied, nearly all rules in $X$ would attain the same result state in one more iteration, thereby enabling a massive Exact Merging. It could then be said that the Approximate Merging was done in vain. This is comparable to the consequences of applying widening too early.

\newcommand{\assert}[1]{\textbf{assert } #1}
\section{Assumption Synthesis}

While a static analysis under non-deterministic program assumptions can be used in many scenarios, we look at an example use-case in this section. Consider a program with assert statements throughout. Our goal is to find a set of assumptions $A \subseteq \A$ such that each assertion is satisfied, if such a set exists.

The solution to this problem is rather straightforward after performing the analysis and obtaining $\llbracket P \rrbracket_{par}$. Let $v$ be the CFG node of an assert statement $\assert{\varphi_v}$, and $X_v$ be the analysis result at $v$. All that needs to be done is to search for a rule $(\phi_i,\sigma_i) \in X_v$ such that $\sigma_i \models \varphi$. Then, for all $A \in \phi_i$, taking the assumptions $A$ makes the program satisfy the assertion at that location. Generally speaking, $\phi_i$ describes a class of such assumption sets. To find a set of assumptions (if one exists) that satisfies \emph{all} assertions, we find all sets of assumptions that satisfy each individual assertion, and then find a common assumption set among them all. More formally, the sets of assumptions that provably (w.r.t. the analysis result) satisfy all assertions is $$Syn(P) = \bigcap_{\text{assertions } v}\bigcup_{\substack{(\phi_i,\sigma_i) \in X_v \\ \sigma_i \models \varphi_v}} \phi_i$$

The reason that the solution to this problem becomes so simple is that we have the guarantees for each assumption set before having to fix any assumptions. Assumptions no longer need to be ``derived.'' Instead, we have all the necessary information to begin with (of course, working over a finite set of user-supplied candidate assumptions also simplifies the problem greatly). With sound over-approximations (such as those produced by abstract interpretation or liberal use of Approximate Merging), however, the information provided by the analysis may be too coarse to prove or disprove the assertions. In this case, we may state ``unknown.'' Here, we maintain soundness in that whenever a set of assumptions is returned, then it truly proves the assertions true. Otherwise, we may output ``unknown'' (or if we can disprove the assertions for all assumption sets, then we may output ``impossible''). 

The following theorem formalizes this by stating that any set of assumptions returned indeed makes the assertion true under the semantics of the original analysis $\llbracket \cdot \rrbracket$. The following theorem is conditional on $Syn(P)$ being non-empty.

\begin{theorem}
    Let $A \in Syn(P)$ and $\llbracket P|_A \rrbracket = (\sigma_{v_1},\ldots,\sigma_{v_n})$. Then, for all CFG nodes $v_i$ representing assertions $\assert{\varphi_{v_i}}$, $\sigma_{v_i} \models \varphi_{v_i}$.
\end{theorem}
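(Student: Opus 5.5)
The plan is to unfold the definition of $Syn(P)$ and then transfer the per-rule satisfaction fact to $\llbracket P|_A \rrbracket$ using Theorem \ref{thm:correctness}.

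Fix $A \in Syn(P)$ and an assertion node $v_i$ with assertion $\assert{\varphi_{v_i}}$. Since $Syn(P)$ is an intersection over all assertion nodes, $A$ lies in the inner union for $v_i$. Hence there is a rule $(\phi_k,\sigma_k) \in X_{v_i}$ with $\sigma_k \models \varphi_{v_i}$ and $A \in \phi_k$. Here $X_{v_i}$ is the component at $v_i$ of $\lfp{\vec{pf}}$, taken in INF; being in INF, it is an element of $Par(\Sigma)$.

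Next I would use that the conditions of $X_{v_i}$ form a disjoint partition of $2^\A$. So $(\phi_k,\sigma_k)$ is the unique rule whose condition contains $A$. By the definition of the representative function, this gives $\rho\llbracket X_{v_i}\rrbracket(A) = \sigma_k$.

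The main step is then to invoke Theorem \ref{thm:correctness}. It gives $\llbracket P\rrbracket_{par}(A) = \llbracket P|_A\rrbracket$, and reading this componentwise at node $v_i$ yields $\rho\llbracket X_{v_i}\rrbracket(A) = \sigma_{v_i}$. Therefore $\sigma_{v_i} = \sigma_k$, and so $\sigma_{v_i} \models \varphi_{v_i}$. Since $v_i$ was an arbitrary assertion node, this proves the claim.

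The only delicate point is bookkeeping rather than mathematics. One must be explicit that $\llbracket P\rrbracket_{par}$ is a tuple of representative functions, one per CFG node, so that Theorem \ref{thm:correctness} can be applied node by node. One must also note that $Syn(P)$ is computed from exactly the (exact, unapproximated) fixpoint used in that theorem.

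If Approximate Merging or an abstract domain were used, equality would weaken to $\rho\llbracket X_{v_i}\rrbracket(A) \sqsupseteq \sigma_{v_i}$. In that case one would additionally need $\models$ to be downward closed: if $\sigma \sqsubseteq \sigma'$ and $\sigma' \models \varphi$, then $\sigma \models \varphi$. This holds for the usual reading of $\models$ as $\sigma \sqsubseteq \pi_\varphi$. For the theorem as stated, however, the exact equality suffices and no extra assumption is needed.
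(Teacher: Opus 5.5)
Your proposal is correct and follows the paper's proof step for step. You unfold $Syn(P)$ to obtain a rule $(\phi_k,\sigma_k) \in X_{v}$ with $A \in \phi_k$ and $\sigma_k \models \varphi_{v}$, use the partition property to get $\rho\llbracket X_{v}\rrbracket(A) = \sigma_k$, and transfer the fact to $\llbracket P|_A \rrbracket$ via Theorem~\ref{thm:correctness}; your closing remark on downward closure of $\models$ under over-approximation goes beyond the paper but is not needed here.
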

\begin{proof}
    Let $v$ be the CFG node of an assert statement $\assert{\varphi_v}$. Then, since $A \in Syn(P)$, we have $A \in \cup \{\phi_i \mid (\phi_i,\sigma_i) \in X_v \land \sigma_i \models \varphi_v\}$ where $X_v$ is the analysis result at $v$ in $\llbracket P\rrbracket_{par}$. Therefore, $\exists (\phi_i,\sigma_i) \in X$ such that $\sigma_i \models \varphi_v$ and $A \in \phi_i$. Since $A \in \phi_i$, we have $\rho\llbracket X_v \rrbracket(A) = \sigma_i$, so $\rho\llbracket X_v \rrbracket(A) \models \phi_i$. By Theorem \ref{thm:correctness}, $\rho\llbracket X_v\rrbracket(A) = \sigma_v$, so $\sigma_v \models \varphi_v$.
\end{proof}

Consider the classic problem that asks ``find a pre-condition on a given program $P$ such that a given post-condition $\varphi$ is satisfied.'' This is, generally speaking, a weaker version of the problem previously discussed. While the two problems can be encoded in the other's form, the classic problem requires disjunctions and other additional complexities to encode path-dependencies, and cannot recover path information. On the other hand, our discussed problem naturally handles path-dependencies and path information since assume/assert statements can be placed anywhere in the program, even inside conditionals. Therefore, assume/assert statements inherently have path information baked in and do not need disjunctions to represent statements that the classic problem would need. This demonstrates that our parameterized analysis is applicable to a wider class of problems in general, as many problems can be encoded in the form of our described assumption synthesis problem without gaining additional complexity.

\paragraph{Remarks.}
There is a difference in expressibility between assumptions and assertions. While assumptions are restricted to program states in the analysis domain (the abstract domain), assertions can be anything \emph{provable} by program states. For example, the $Interval$ domain cannot express the assumption $x \leq y$, but it can \emph{prove} the assertion $x \leq y$ in a program state where $[x \mapsto [a,b], y\mapsto [c,d]]$ and $b \leq c$.

Observe that these proofs can only be done with enough precision in the analysis, and we obtain increased precision by strengthening program states with assumptions. Therefore, precision and proof ability is influenced by two things: the assumptions and the chosen abstract domain. The assumptions can enable enough precision to prove \emph{stronger} properties, while the abstract domain can enable enough expressiveness to prove more \emph{kinds} of properties. 

As a final note, it is important to note what is being said in this problem: let $A \in Syn(P)$. Then, \emph{if} the assumptions $A$ are taken to be true, \emph{then} all assertions are true. This is a conditional statement, where the validity of the assumptions in $A$ is not necessarily guaranteed. Because the problem is framed this way, unsoundness in the static analysis does not cause issues. If the assumptions contradict the true reachable program states, however, then knowledge of this can assist developers discover key issues in a large code base or even prove mutual exclusion properties of given assumptions. In the following section, we present a principled framework for eliminating such contradictions.

\section{Consistent Assumption Sets}

As stated before, unsoundness is not always an issue since the parameterized analysis result is of the form ``\emph{if} the assumptions $A$ are taken to be true, \emph{then} the program state at $v$ is $\sigma$.'' Sometimes, however, it is useful to eliminate \emph{inconsistent} sets of assumptions. In this section, we show that the parameterized analysis provides an easy means to do so.

We say that a set of assumptions $A \subseteq \A$ is \emph{consistent} if the static analysis under $A$ cannot disprove any assumptions in $A$ nor can it allow any new assumptions in $\A$. Let $\Phi(A)$ be the set of assumptions in $\A$ that are not refuted by the static analysis under $A$. A consistent set is then a fixed point of $\Phi$. More formally, for a statement $\assume{p}$, let $$\zeta_p(X) = \{\phi \mid (\phi,\sigma) \in X \land \neg Feasible_p(\sigma)\}$$
This is the set of conditions in $X$ that yield infeasible program states w.r.t. the assumption $p$. The set of assumption sets that \emph{refute} an assumption $a \in \A$ is then $$B_a(X_{v_1},\ldots,X_{v_n}) = \bigcup_{\phi_i \in \zeta_p(X_a)} \phi_i$$
where $a$ is the CFG node for a statement $\assume{p}$. Finally, $\Phi$ can be formally defined as $$\Phi(A) = \{a \in \A \mid A \notin B_a(\llbracket P|_A \rrbracket)\}$$
That is, $\Phi(A)$ is the set of assumptions that are not refuted by $A$. Taking a closer look at this function, it is evident that $\Phi$ is anti-monotone. This is because larger $A$ yields a more precise analysis (more precise $X_a$ for all $a \in \A$), which makes it easier to refute more assumptions (monotonicity of $Feasible$), which leaves fewer remaining assumptions. The dual statement can also be made.

\newcommand{\fix}[1]{\operatorname{fix}(#1)}
\newcommand{\gfp}[1]{\operatorname{gfp}(#1)}
Since $\Phi$ is anti-monotone, $\Phi^2$ is monotone and thus has a least fixed point by the Knaster-Tarski fixed point theorem. Furthermore, it is known that $\lfp{\Phi^2} \subseteq \fix{\Phi}$ and $\fix{\Phi} \subseteq \Phi(\lfp{\Phi^2})$ (and that $\Phi(\lfp{\Phi^2}) = \gfp{\Phi^2}$). This provides upper and lower bounds on the fixed points of $\Phi$ (the consistent assumption sets). Let $\mu = \lfp{\Phi^2}$ and $\nu = \gfp{\Phi^2}$. These bounds allow us to make certain inferences, such as:
\begin{itemize}
    \itemsep-1em 
    \item If $a \in \mu$, then $a$ appears in \emph{every} consistent set of assumptions.\\
    \item If $a \in \nu$, then $a$ appears in \emph{some} consistent set of assumptions.
    \begin{itemize}
        \item If $a \notin \nu$, then $a$ \emph{never} appears in any consistent set of assumptions.
    \end{itemize}
\end{itemize}

This delivers a principled method for guiding tools towards consistent assumptions. Observe that this fixpoint computation only works because we have the resulting analysis function in $2^\A \to \Sigma$. Without this, computing $\Phi(A)$ would require re-running the analysis under $A$ which is often infeasible (since $\Phi$ needs to be iterated potentially many times to reach $\lfp{\Phi^2}$).

Also note that $\mu$ and $\nu$ may form a very poor bound. In fact, $(\mu,\nu)$ may span the entire lattice, even if the fixed points are consolidated. There are existing studies on efficiently obtaining tighter bounds \cite{anon26}, which can make this consistency search more feasible and useful.

\section{Related Works}

Lifted static analysis solves a similar problem of analyzing a program parameterized by environments. In the case of lifted static analysis, whose main application targets Software Product Lines (SPLs), analyses are lifted to support syntactic, compile-time variability as static configuration options ``\texttt{ifdef FEATURE}'' \cite{dimovski19,chen14,dimovski22}. In \cite{dimovski21}, configuration options are relaxed to be dynamic by treating features as first-class program variables. In both cases, these works operate on \emph{features}, while our work operates on \emph{assumptions}. There is a key difference, that features in lifted static analyses are typically global or scoped variables intended to be ``configurations'' while assumptions in our work are \emph{semantic constraints} introduced at arbitrary program points. These are not ``variable features'' designed by a programmer, but are rather formal constraints treated non-deterministically by the analyzer. Furthermore, consistency in our approach is dependent on semantics rather than fixed ``feature models.'' Our approximate merging also enables sound merging of assumption paths, providing a precision/scalability knob. These differences represent a significant transfer of technology from software engineering to formal verification.

Trace partitioning is a well studied approach to mitigate precision loss by distinguishing execution traces based on the history of control flow decisions \cite{xavier07}. Our approach extends this philosophy of partitioning from control flow to the assumption space. While trace partitioning splits the state based on conditionals (like \texttt{if (c) ...}), our approach splits state based on assumptions (like \texttt{assume (P) ...}). This partitioning makes the analysis state conditional not just on the program's internal history, but on additional ``hypothetical behavior.'' This is a shift from refining \emph{control} to refining \emph{context}.

Assumption synthesis in assume-guarantee (AG) reasoning has seen various approaches, each with their own merits. One well-known example is learning-based AG automation \cite{chaki07,blundell05,Pasareanu2008,nam06}, which learns assumptions as regular languages in a ``guess-and-check'' loop between a learner and teacher. Despite known optimizations, this learning process is computationally expensive and primarily effective for finite-state properties. In our approach, we eliminate the iteration loop common in these learning algorithms (most often $L^*$ learning) and instead perform \emph{symbolic assumption propagation}. By parameterizing the analysis of a component (as AG works to verify properties in components of a system) under non-deterministic assumptions, our approach computes the \emph{entire function} of the component's behavior relative to the assumption space. The task of proving a property of the whole system then moves from a model checking task to a consistency check between the parameterized results of the various components (by domain intersection, satisfiability, or other means). This positions our method as a ``white-box'' approach, as opposed to the ``black-box'' learning of standard techniques. It leverages the static analysis of the source code to derive the dependency structure between assumptions and guarantees explicitly.

The application of our analysis to assumption synthesis is adjacent to existing abductive inference-guided methods \cite{spies24,albarghouthi16,dillig12}. While our approach has a less forgiving precision/scalability trade-off than these approaches, it maintains a \emph{family} of sufficient conditions rather than single weakest conditions. By tracking how conditional dependencies evolve at every program point, we effectively maintain a ``running weakest pre-condition'' parameterized by abstract semantic assumptions rather than raw logic. This also demonstrates a difference between our approach and other model checking guided approaches \cite{wang05,giannakopoulou05}. This difference also outlines the trade-offs between the kinds of approaches.

\section{Conclusion}

We defined an analysis over a parameterized abstract domain equivalent to functions from sets of program assumptions to corresponding analysis results under those assumptions. These assumptions were provided at program locations by the user, and treated non-deterministically by the analyzer. To keep the parameterized state representation computationally feasible, we enforced a unique compact representation on the parameterized domain. This also gave it the properties of a complete lattice. We provided a method for approximating parameterized states in exchange for improved scalability, enabling a precision/scalability knob at the developer's discretion. To illustrate the usefulness of the parameterized analysis, we demonstrated how it simplifies assumption synthesis problems by changing the problem into an optimization over a search space that no longer needs to be discovered. We also showed how the analysis result being functional enables useful fixpoint computation, which we took advantage of to guide a search for consistent sets of assumptions.

\bibliographystyle{plain}
\bibliography{refs}

\end{document}